\documentclass[letterpaper, 10 pt, conference]{IEEEtran}
\IEEEoverridecommandlockouts

\usepackage{amsmath,amssymb, amsthm, mathtools}
\usepackage{algorithm}
\usepackage[noend]{algpseudocode}
\usepackage{tikz}
\usetikzlibrary{%
  arrows.meta, positioning, calc, shapes.geometric, fit, backgrounds,
  patterns, decorations.pathreplacing, tikzmark}
\usepackage{booktabs}
\usepackage{multirow}
\usepackage{xcolor}
\usepackage{hyperref}
\usepackage{cleveref}

\usepackage[normalem]{ulem}

\newtheorem{assumption}{Assumption}
\newtheorem{definition}{Definition}
\newtheorem{theorem}{Theorem}
\newtheorem{lemma}{Lemma}
\newtheorem{corollary}{Corollary}
\newtheorem{proposition}{Proposition}
\newtheorem{remark}{Remark}

\crefname{assumption}{Assumption}{Assumptions}
\Crefname{assumption}{Assumption}{Assumptions}
\crefrangelabelformat{assumption}{#3#1#4--#5#2#6}
\crefname{equation}{}{}
\Crefname{equation}{Equation}{Equations}

\DeclareMathOperator{\Pre}{Pre}
\DeclareMathOperator{\Bas}{Bas}

\newcommand{\Down}{\mathord{\downarrow}}
\newcommand{\N}{\mathbb{N}}
\newcommand{\R}{\mathbb{R}}

\definecolor{safeblue}{RGB}{70,130,180}
\definecolor{safefill}{RGB}{220,235,250}
\definecolor{dangerfill}{RGB}{255,235,235}
\definecolor{accentgreen}{RGB}{40,160,100}
\definecolor{accentorange}{RGB}{230,140,30}
\newif\ifshowdeletions
\showdeletionsfalse 
\newif\ifshowoldtable
\showoldtablefalse 
\newif\ifshowtimingbreakdown
\showtimingbreakdownfalse 
\newcommand{\added}[1]{\textcolor{black}{#1}}
\newcommand{\deleted}[1]{\ifshowdeletions\textcolor{blue}{\sout{#1}}\fi}
\newcommand{\replaced}[2]{\deleted{#1}\added{#2}}
\let\change\added

\makeatletter
\renewcommand{\theALG@line}{\thealgorithm.\arabic{ALG@line}}
\renewcommand{\theHALG@line}{\thealgorithm.\arabic{ALG@line}}
\makeatother

\title{Real-Time Synthesis of Robust Controlled Invariant Sets for Monotone Systems}
\author{\IEEEauthorblockN{Yasin Sonmez, Mahmoud Khaled, Majid Zamani, Murat Arcak}%
\thanks{Y. Sonmez and M. Arcak are with the 
University of California, Berkeley, USA (\{yasin\_sonmez,arcak\}@berkeley.edu); M. Zamani is with the 
University of Colorado Boulder, USA (majid.zamani@colorado.edu); M. Khaled is with the 
Ludwig-Maximilian University, Germany (m.mahmoud@lmu.de). This work is supported in part by the NSF grant CNS-2111688.}}
\begin{document}
\maketitle

\begin{abstract}
Safety-critical control of autonomous systems requires formal safety certificates, such as controlled invariant sets, that must be computed online as 
conditions change. 
Although standard synthesis algorithms scale poorly with state dimension,
monotone dynamical systems with lower-closed safety specifications allow for accelerated computation of controlled invariant sets.
In particular,
 \emph{lazy} fixed-point algorithms exploit monotonicity and track only the antichain basis of the set. However, membership tests and redundancy checks against an evolving basis remain major bottlenecks.
We introduce a threshold-function reformulation in which a lower-closed set on a $d$-dimensional grid 
is represented by its column heights along a designated axis.
This reformulates the greatest-fixed-point iteration as independent one-dimensional binary searches, one per grid column, yielding an embarrassingly parallel iteration with asymptotically lower computational complexity than the lazy fixed-point algorithm.
Experiments synthesize invariant sets on 3D grids with $10^{9}$ cells in under 50\,ms and $10^{14}$ cells in under two minutes.
\replaced{We further use these sets as online safety filters inside a real-time model predictive controller example.}{We further demonstrate online re-synthesis in a safety-informed model predictive controller example.}
\end{abstract}

\section{Introduction}\label{sec:intro}

Controlled invariant sets characterize safe regions in which trajectories can be kept for all future time through appropriate control actions.
Symbolic abstractions \cite{tabuada2009book} locate controlled invariant sets through finite-state fixed-point computations, but their cost grows sharply with state dimension.

\replaced{To address the scalability barrier, compositional~\cite{zamani2018compositional} and data-driven abstractions~\cite{devonport2021symbolic} have been pursued. Another}{One approach to overcome the scalability barrier} is to exploit monotone dynamics:
When the dynamics are monotone with respect to a partial order and the safe set is lower-closed with respect to the same order, the maximal robust controlled invariant set is also lower-closed and is uniquely determined by its \emph{antichain basis}~\cite{saoud2019efficient}.
On an $N^d$ grid, the basis size is $O(N^{d-1})$, one order of magnitude smaller than the underlying grid.
The \emph{lazy} fixed-point algorithms of~\cite{saoud2019efficient,ivanova2020lazy,ivanova2022lazy} 
track only the basis and update the antichain boundary between iterations.
Two bottlenecks remain despite this reduction:
\begin{enumerate}
  \item \textbf{Quadratic per-iteration cost.} Each iteration performs $O(N^{d-1})$ safety checks, and each check scans the basis to test membership in its lower closure, giving $O(N^{2(d-1)})$ work per iteration.
  \item \textbf{Sequential neighbor generation.} After unsafe elements are removed, new candidate neighbors must be added to the basis, each requiring a redundancy check---an inherently sequential step that limits parallelism.
\end{enumerate}

Instead of representing the boundary of the lower-closed set with 
basis points, which must be iteratively updated and searched, in this paper we store this surface as a threshold function along a designated axis.
We then reformulate the greatest-fixed-point (GFP) iteration to operate \emph{solely} on this threshold function, eliminating basis tracking and sequential neighbor generation.
The contributions are:
\begin{itemize}
  \item We encode lower-closed sets by a threshold function with $O(N^{d-1})$ storage and $O(1)$ membership queries. 
  Parallel column updates yield per-iteration work $O(N^{d-1}\log N)$ without basis enumeration or neighbor checks.
  \item We prove one-step equivalence with the predecessor operator and equivalence with the lazy basis formulation.
  \item \deleted{We show experimentally that the threshold iteration yields speedups of up to $21{,}924{\times}$ over basis-based lazy synthesis and supports real-time, closed-loop re-synthesis in 39--52\,ms on grids with over $10^{10}$ cells and in under two minutes on grids with $10^{14}$ cells.}\added{We show experimentally that the threshold iteration yields speedups on the order of thousands over basis-based lazy synthesis. The $10^{14}$-cell scaling benchmarks complete in under two minutes, while the separate $10^{10}$-cell closed-loop experiment completes in approximately 62--77\,ms.}
\end{itemize}

\added{Existing synthesis methods that leverage structural properties include sparse and compositional constructions for monotone systems and directed specifications~\cite{kim2017symbolic}, flexible abstraction-based pipelines~\cite{kim2019flexible}, lazy exploration of exposed boundaries using antichain representations~\cite{ivanova2022lazy}, and ordered reductions for robust controlled invariance~\cite{saoud2024characterization}. Height encodings of staircase sets and discrete Pareto frontiers are well-established. Our contribution lies not in the height encoding, but in its use as the sole representation in robust invariant-set GFP iteration, together with the column-prefix result, binary-search update, equivalence proof, and parallel implementation.}

\deleted{Abstraction tools for general (non-monotone) systems include \textsc{SCOTS}~\cite{rungger2016scots}, \textsc{ROCS}~\cite{li2018rocs}, and \textsc{Mascot}~\cite{hsu2018multi}. \textsc{pFaces}~\cite{khaled2019pfaces} provides a heterogeneous parallel back-end and is the foundation upon which
our implementation is built.}

\section{Preliminaries}\label{sec:prelim}

\subsection{Monotone systems and lower sets}

We study continuous-time systems of the form
\begin{equation}\label{eq:cts}
  \dot \xi = f(\xi,u,\vartheta),
\end{equation}
where $\xi \in \R^d$ is the continuous state, $u \in \R^m$ is the control input, and $\vartheta$ collects exogenous effects such as disturbances and uncertain parameters.
We obtain a finite model from~\eqref{eq:cts} using  sampled-data symbolic abstraction \cite{tabuada2009book,rungger2016scots,li2018rocs}.
Given a sampling period, state grid, and  finite input alphabet, we overestimate one-step reachable sets under all admissible exogenous signals and 
obtain a finite transition system,
\begin{equation}\label{eq:sys}
  x^+ = F(x,u,\delta).
\end{equation}
Here, $x$ belongs to a finite grid $X$, $u$ belongs to a finite admissible control set $\mathcal U$, and $\delta$ belongs to a finite adversarial set $\Delta$.
\replaced{Here, $\Delta$ is the abstraction-level uncertainty set: its elements index the successor cells induced by the admissible continuous exogenous signals $\vartheta$, uncertain parameters, and discretization over-approximation over one sampling interval.}{This set is finite at the abstraction level and conservatively indexes all abstract successor cells induced by every admissible continuous exogenous signal $\vartheta$, uncertain parameter, and discretization over-approximation over one sampling interval. It is not an arbitrary finite sample of a continuous disturbance set. Replacing a continuous disturbance set by finitely many extremal modes is sound only under an additional dominance property, such as disturbance-state monotonicity.}

\begin{definition}[Componentwise order]\label{def:order}
We equip the state grid with the componentwise order: $x\preceq y$ if and only if $x_j\le y_j$ for all $j\in\{1,\ldots,d\}$.
When partial orders on $\mathcal U$ and $\Delta$ are used, they are denoted by $\preceq_{\mathcal{U}}$ and $\preceq_\Delta$.
\end{definition}

\begin{assumption}[Finite grid abstraction]\label{ass:grid}
The abstract state space is the finite partially ordered set
\[
  X = \{1,\ldots,N_1\}\times\cdots\times\{1,\ldots,N_d\},
\]
with the componentwise order of \Cref{def:order}.
We write $N:=\max_j N_j$ and $|X|=\prod_j N_j$.
\end{assumption}

\begin{definition}[Lower set and antichain basis]\label{def:down}
For $S\subseteq X$, define its downward closure by
\[
  \Down S := \{q\in X : \exists\,s\in S\text{ such that } q\preceq s\}.
\]
\end{definition}
A set $K \subseteq X$ is a \emph{lower set} if $q \in K$ and $q' \preceq q$ imply $q' \in K$.
Its \emph{antichain basis} is the set of its maximal elements:
\[ \Bas(K) \coloneqq \{q \in K : \nexists\, q' \in K \setminus \{q\},\; q \preceq q'\}. \]
A lower set $K$ is fully determined by its basis: $K = \Down\Bas(K)$.

\begin{remark}[Antichain bound]\label{rem:antichain-bound}
On the grid of \Cref{ass:grid}, the maximal antichain size is bounded by $O(N^{d-1})$; every lower set therefore admits a basis of size at most $O(N^{d-1})$, one order of magnitude smaller than $|X|=O(N^d)$.
\end{remark}
\subsection{Monotone dynamics and the robust controlled pre-operator}\label{sec:prelim:pre}

\begin{definition}[Restricted robust controlled predecessor]\label{def:pre}
\deleted{For $K\subseteq X$, $\Pre(K)$ restricts the predecessor of $K$ to states in $K$.}
\added{For $A,B\subseteq X$ and $V\subseteq\mathcal U$, define}
\begin{equation}\label{eq:pre}
\added{\Pre(A,V,B)
:=
\{q\in A:\exists u\in V\ \forall\delta\in\Delta,\,
F(q,u,\delta)\in B\}.}
\end{equation}
\added{For the safety iteration, define
\[\Phi(K):=\Pre(K,\mathcal U,K),\]
which extracts from $K$ those states that can \emph{be kept} in $K$ for one step despite the disturbances.}
\end{definition}

\begin{definition}[Controlled invariant set]\label{def:ci}
A subset $K \subseteq X_S$ is \emph{controlled invariant} (with respect to system~\eqref{eq:sys} and safe set $X_S$) if for every $x \in K$, there exists $u \in \mathcal U$ such that $F(x, u, \delta) \in K$ for all $\delta \in \Delta$.
Equivalently, \replaced{$K = \Pre(K)$}{$K = \Phi(K)$}.
\end{definition}

\begin{assumption}[Lower-closed safety]\label{ass:lower-safe}
The safe state set $X_S\subseteq X$ is a lower set under $\preceq$.
\end{assumption}

\begin{definition}[State-monotonicity (SM)]\label{def:state-monotone}
The abstract transition map \eqref{eq:sys} is \emph{state-monotone} if
\[
  x_1\preceq x_2 \Longrightarrow F(x_1,u,\delta)\preceq F(x_2,u,\delta)
\]
for all $u\in\mathcal U$ and all $\delta\in\Delta$.
\end{definition}

\begin{remark}[Stronger monotonicity variants]\label{rem:monotone-levels}
Reference~\cite{saoud2024characterization} 
also equips the control set $\mathcal U$ and/or the disturbance set $\Delta$ with partial orders 
to define control-state monotone (CSM), disturbance-state monotone (DSM), and control-disturbance-state monotone (CDSM) systems.
The correctness results below require only state monotonicity (SM).
When the stronger variants hold, they allow the predecessor check to be restricted to minimal controls, maximal adversarial modes, or both.
\end{remark}

\begin{proposition}[Lower-closed predecessor]\label{prop:pre-lower}~\cite[\S~4.1, Lemma~2]{ivanova2022lazy}
Suppose \Cref{ass:grid,ass:lower-safe} hold and $F$ is state-monotone (Definition~\ref{def:state-monotone}).
Then \replaced{$\Pre(K)$}{$\Phi(K)$} is a lower set for every lower set $K\subseteq X$.
\end{proposition}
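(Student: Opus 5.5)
The plan is a direct element-chasing argument. By \Cref{def:pre}, $\Phi(K)=\Pre(K,\mathcal U,K)$, so we fix a lower set $K\subseteq X$, a state $q\in\Phi(K)$, and any $q'\in X$ with $q'\preceq q$, and we show $q'\in\Phi(K)$. Membership in $\Phi(K)$ has two parts: $q'\in K$, and there is some $u\in\mathcal U$ with $F(q',u,\delta)\in K$ for every $\delta\in\Delta$. We check each part separately.

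\textbf{Membership in $K$.} Since $q\in\Phi(K)\subseteq K$ and $K$ is a lower set, $q'\preceq q$ gives $q'\in K$ immediately.

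\textbf{Existence of a safe input.} Because $q\in\Phi(K)$, there is a witness $u^\star\in\mathcal U$ with $F(q,u^\star,\delta)\in K$ for all $\delta\in\Delta$. We reuse this same $u^\star$ for $q'$. Fix any $\delta\in\Delta$. By state-monotonicity (\Cref{def:state-monotone}), $q'\preceq q$ implies $F(q',u^\star,\delta)\preceq F(q,u^\star,\delta)$. The right-hand side lies in $K$, and $K$ is lower-closed, so $F(q',u^\star,\delta)\in K$. Since $\delta$ was arbitrary, $q'$ meets the quantifier condition in~\eqref{eq:pre}, and hence $q'\in\Phi(K)$.

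The points that need care are bookkeeping rather than real obstacles. First, the witness input is existential and disturbance-independent, so it transfers unchanged from $q$ to $q'$. Second, we must use the map $F$ of~\eqref{eq:sys} on the grid $X$ of \Cref{ass:grid}, so that successors stay in $X$ and the order comparison is meaningful. \Cref{ass:lower-safe} is not needed for the statement itself. It only ensures that the iteration started at $X_S$ stays within lower sets, which follows by induction from this proposition.
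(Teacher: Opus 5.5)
Your proof is correct and matches the paper's argument step for step: lower-closedness of $K$ gives $q'\in K$, and the witness $u^\star$ for $q$ also works for $q'$ by state monotonicity together with lower-closedness of $K$. Your side remark that \Cref{ass:lower-safe} is not used in this proposition is also accurate.
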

\begin{proof}
Let \replaced{$x\in\Pre(K)$}{$x\in\Phi(K)$} and $y\preceq x$.
Since $x\in K$ and $K$ is lower, $y\in K$.
Choose $u\in\mathcal U$ such that $F(x,u,\delta)\in K$ for all $\delta\in\Delta$.
By state monotonicity,
\[
  F(y,u,\delta)\preceq F(x,u,\delta)\in K \qquad \forall\,\delta\in\Delta.
\]
Lower-closedness of $K$ then gives $F(y,u,\delta)\in K$ for all $\delta$, hence \replaced{$y\in\Pre(K)$}{$y\in\Phi(K)$}.
\end{proof}

\begin{corollary}[Ordered reductions]~\cite[\S~5, Thm.~1 and Prop.~2]{saoud2024characterization}
\label{cor:extremal}
Let $K\subseteq X$ be a lower set and let \replaced{$x\in X$}{$x\in K$}.
If the stronger CSM variant of \Cref{rem:monotone-levels} holds, then it is enough to test minimal controls:
\[
  \added{x\in\Phi(K)}
  \iff
  \exists\,u\in\mathrm{Min}(\mathcal U)\ \forall\,\delta\in\Delta:\ F(x,u,\delta)\in K.
\]
If the stronger DSM variant holds, then for every fixed $u\in\mathcal U$ it is enough to test maximal adversarial modes:
\[
  \added{x\in\Phi(K)}
  \iff
  \exists\,u\in \mathcal U\ \forall\,\delta\in \mathrm{Max}({\Delta}):\ F(x,u,\delta)\in K.
\]
If the CDSM variant holds, one may therefore test only minimal controls against maximal adversarial modes.
\replaced{Let $C$ be the number of successor evaluations used in one membership test of $x\in\Pre(K)$.}{Let $C$ denote the per-state successor-evaluation count for testing $x\in\Phi(K)$.}
In the general state-monotone case, $C=|\mathcal U|\,|\Delta|$.
The ordered reductions \cite{saoud2024characterization} give
$
  C_{\mathrm{CSM}}=|\mathrm{Min}(\mathcal U)|\,|\Delta|, \quad
  C_{\mathrm{DSM}}=|\mathcal U|\,|\mathrm{Max}(\Delta)|,$
  and 
$
  C_{\mathrm{CDSM}}=|\mathrm{Min}(\mathcal U)|\,|\mathrm{Max}(\Delta)|.$ 
\end{corollary}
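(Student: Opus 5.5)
The plan is to prove each of the three equivalences directly from the definition of $\Phi(K)=\Pre(K,\mathcal U,K)$ in \eqref{eq:pre}, using only two facts: $K$ is lower-closed, and the ordered monotonicity variants of \Cref{rem:monotone-levels}. Since those variants are not spelled out above, I will fix them in the form used in~\cite{saoud2024characterization}:
\begin{itemize}
\item CSM: $u_1\preceq_{\mathcal U}u_2$ implies $F(x,u_1,\delta)\preceq F(x,u_2,\delta)$, so smaller controls push the state lower.
\item DSM: $\delta_1\preceq_\Delta\delta_2$ implies $F(x,u,\delta_1)\preceq F(x,u,\delta_2)$, so larger adversarial modes are worse.
\end{itemize}
Both are in addition to state monotonicity. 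Because $\Phi(K)$ also requires $x\in K$, I will read each displayed equivalence with the conjunct $x\in K$ on the right-hand side (equivalently, for $x\in K$). I will state this explicitly: for $x\notin K$ the left side is false while the right side might hold. The counting statements about $C$ then follow by inspecting which pairs $(u,\delta)$ are evaluated.

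\emph{CSM step.} The direction ``$\Leftarrow$'' is immediate, since $\mathrm{Min}(\mathcal U)\subseteq\mathcal U$. For ``$\Rightarrow$'', take a witness $u\in\mathcal U$ with $F(x,u,\delta)\in K$ for all $\delta$. Because $\mathcal U$ is finite, there is a minimal element $u'\in\mathrm{Min}(\mathcal U)$ with $u'\preceq_{\mathcal U}u$. By CSM, $F(x,u',\delta)\preceq F(x,u,\delta)\in K$, and lower-closedness of $K$ gives $F(x,u',\delta)\in K$ for every $\delta$. Hence $u'$ is also a witness.

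\emph{DSM step.} Fix $u$. The direction ``$\Rightarrow$'' is trivial, since $\mathrm{Max}(\Delta)\subseteq\Delta$. For ``$\Leftarrow$'', take any $\delta\in\Delta$; by finiteness of $\Delta$ there is a $\delta'\in\mathrm{Max}(\Delta)$ with $\delta\preceq_\Delta\delta'$. By DSM, $F(x,u,\delta)\preceq F(x,u,\delta')\in K$, so $F(x,u,\delta)\in K$ by lower-closedness. The universal quantifier over $\Delta$ therefore reduces to one over $\mathrm{Max}(\Delta)$ for each fixed $u$, and this passes through the outer $\exists u$.

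\emph{CDSM step.} Apply the CSM reduction to the outer quantifier and then the DSM reduction to the inner quantifier for each $u\in\mathrm{Min}(\mathcal U)$. This is legitimate because the two reductions act on different quantifiers. The cost formulas follow because a membership test evaluates $F(x,u,\delta)$ once per pair in the reduced product set, giving $|\mathcal U||\Delta|$, $|\mathrm{Min}(\mathcal U)||\Delta|$, $|\mathcal U||\mathrm{Max}(\Delta)|$ and $|\mathrm{Min}(\mathcal U)||\mathrm{Max}(\Delta)|$ respectively.

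The only delicate point is the existence of a minimal element below $u$ (respectively a maximal element above $\delta$). This holds by finiteness of the posets: I would follow a strictly descending (ascending) chain, which must terminate. Beyond that, the step I would watch most closely is pinning down the exact monotonicity direction in the CSM/DSM definitions, since the argument relies on minimal controls and maximal disturbances being the extremal cases relative to a lower set.
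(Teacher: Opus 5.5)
Your proof is correct, but it does not follow the paper's route, because the paper has no proof of its own here. It simply defers to \cite[\S~5]{saoud2024characterization} for the reductions and for the counts $C_{\mathrm{CSM}}$, $C_{\mathrm{DSM}}$ and $C_{\mathrm{CDSM}}$.

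You instead argue directly from \eqref{eq:pre}: for CSM you take a minimal control below a witness, for DSM a maximal mode above an arbitrary $\delta$, and in both cases you close with lower-closedness of $K$. This buys several things the citation leaves implicit:
\begin{itemize}
\item It fixes the orientation of CSM/DSM that makes minimal controls and maximal modes the correct extremal choices relative to a lower set. The paper's Remark never states this orientation.
\item It proves exactly the one-step membership statement used by the binary search of \Cref{cor:safe-prefix}, which is where $C$ enters the complexity bounds.
\item It shows that state monotonicity is not needed for this corollary. Only monotonicity of $F(x,\cdot,\delta)$ and $F(x,u,\cdot)$ at a fixed state, plus lower-closedness of $K$, are used.
\end{itemize}
One implicit step: your CDSM case requires that CDSM implies both CSM and DSM. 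This holds for the usual joint-monotonicity definition, but you should say so.

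Your caveat about $x\in K$ is warranted, not pedantic. Since $\Phi(K)=\Pre(K,\mathcal U,K)\subseteq K$, the displayed equivalences fail for $x\in X\setminus K$ as written. For instance, take $X=\{1,2\}$, $K=\{1\}$ and $F\equiv 1$ at $x=2$: the right-hand side holds but $x\notin\Phi(K)$. So the hypothesis should be the original $x\in K$, or the conjunct $x\in K$ should be added on the right.

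The citation's only advantages are brevity and exact agreement with the source's definitions, which the paper does not restate.
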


\begin{theorem}[Descending predecessor sequence]\label{thm:descending-seq}
Suppose \Cref{ass:grid,ass:lower-safe} hold and $F$ is state monotone.
Define
\[
  K_0 := X_S,
  \qquad
  K_{i+1} := \replaced{\Pre(K_i)}{\Phi(K_i)}, \quad i\in\N.
\]
Then each $K_i$ is a lower set, the sequence is descending, and it converges in finitely many steps to a lower set $K_*\subseteq X_S$ which is the maximal robust controlled invariant subset of $X_S$.
\end{theorem}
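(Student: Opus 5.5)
We prove the four claims in order: lower-closedness, monotone descent, finite termination, and maximality of the limit. Throughout, $\Pre(K)$ denotes $\Phi(K)=\Pre(K,\mathcal U,K)$.

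\emph{Lower sets.} We argue by induction on $i$. The base case $K_0=X_S$ is a lower set by \Cref{ass:lower-safe}. If $K_i$ is a lower set, then \Cref{prop:pre-lower} shows that $K_{i+1}=\Pre(K_i)$ is also a lower set.

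\emph{Descent.} Definition~\ref{def:pre} builds $\Pre(K)$ from states of $K$ only, so $\Pre(K)\subseteq K$ for every $K$. Hence $K_{i+1}\subseteq K_i$ for all $i$, and the chain satisfies $K_0\supseteq K_1\supseteq\cdots$.

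\emph{Termination.} The grid $X$ is finite by \Cref{ass:grid}. A strictly decreasing chain of subsets of $X$ therefore has at most $|X|+1$ members. So there is an index $i^*\le|X|$ with $K_{i^*+1}=K_{i^*}$. Set $K_*:=K_{i^*}$. Since $K_*=\Pre(K_*)$, every later iterate equals $K_*$. Moreover $K_*\subseteq K_0=X_S$, $K_*$ is a lower set, and $K_*$ is controlled invariant by Definition~\ref{def:ci}.

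\emph{Maximality.} This is the only step that needs more than bookkeeping. We first show that $\Pre$ is monotone in $K$: if $K\subseteq K'$, then $\Pre(K)\subseteq\Pre(K')$. Take $q\in\Pre(K)$. Then $q\in K\subseteq K'$, and the control $u$ witnessing membership keeps every successor $F(q,u,\delta)$ in $K\subseteq K'$, so $q\in\Pre(K')$.

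Now let $C\subseteq X_S$ be any controlled invariant set, so $C=\Pre(C)$. We show $C\subseteq K_i$ by induction. The base case is $C\subseteq X_S=K_0$. If $C\subseteq K_i$, monotonicity gives
\[
C=\Pre(C)\subseteq\Pre(K_i)=K_{i+1}.
\]
Therefore $C\subseteq K_*$.

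Since $K_*$ is itself a controlled invariant subset of $X_S$ and contains every other one, it is the maximal one. It is also unique: it equals the union of all controlled invariant subsets of $X_S$, and that union is determined by $X_S$ and $F$ alone. This identifies $K_*$ as the greatest fixed point.

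State monotonicity is used only to guarantee that each iterate is a lower set. Termination and maximality rely only on finiteness of $X$ and monotonicity of $\Pre$ in its set argument.
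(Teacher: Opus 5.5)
Your proof is correct and follows the paper's argument essentially step for step: lower-closedness by induction via \Cref{prop:pre-lower}, descent from $\Phi(K)\subseteq K$, termination by finiteness of $X$, and maximality by induction using inclusion-monotonicity of $\Phi$. You also prove that monotonicity of $\Phi$ explicitly and verify that the limit $K_*=\Phi(K_*)$ is itself controlled invariant, both of which the paper leaves implicit.
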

\begin{proof}
Proposition~\ref{prop:pre-lower} gives lower-closedness of each $K_i$ by induction.
Since $\Phi(K)\subseteq K$ for every $K$, the sequence is descending.
Because $X$ is finite, the descending sequence converges after finitely many strict inclusions.

If $K\subseteq X_S$ is a controlled invariant set,
then $K\subseteq K_0$.
Assume inductively that $K\subseteq K_i$.
Since $K\subseteq\Phi(K)$ and the map $\Phi$ is monotone with respect to set inclusion,
\[
  K \subseteq \Phi(K) \subseteq \Phi(K_i)=K_{i+1}.
\]
Hence $K\subseteq K_i$ for all $i$, and therefore $K\subseteq K_*$.
Thus $K_*$ is the maximal robust controlled invariant subset of $X_S$.
\end{proof}

\subsection{Lazy fixed-point iteration}\label{sec:prelim:lazy}

\Cref{prop:pre-lower} together with \Cref{rem:antichain-bound} means that $K_*$ and every iterate $K_i$ can be represented by its $O(N^{d-1})$-size antichain basis rather than by the full $O(N^d)$ grid. The \emph{lazy} algorithm of~\cite{saoud2019efficient}, summarized in Algorithm~\ref{alg:lazy}, exploits this by storing only the current basis $B$ and updating it.

\added{Let $R_{\mathrm{lazy}}$ denote the number of complete outer basis passes until convergence}
and $|B|=O(N^{d-1})$ denote the basis-size bound. If the membership query ``$x\in\Down B$'' scans $B$, the safety-check phase costs $O(R_{\mathrm{lazy}}CN^{2(d-1)})$, up to dimension factors. Neighbor generation performs dominance rather than successor queries and contributes $O(R_{\mathrm{lazy}}N^{2(d-1)})$ with the same scan-based representation.
\added{Since an unsafe basis element is removed and neighbors are inserted immediately, subsequent safety and neighbor checks use an evolving $B$ and are sequential across basis updates. For a removed element $b$, the explicit update in Algorithm~\ref{alg:lazy} implements $\Bas(\Down B\setminus\{b\})$ from~\cite{saoud2019efficient}.}

\begin{algorithm}[t]
\caption{Lazy fixed point iteration~\cite{saoud2019efficient}}
\label{alg:lazy}
\begin{algorithmic}[1]
\Require Lower set $X_S$, transition $F$, inputs $\mathcal U$, adversarial set $\Delta$
\State $B \gets \Bas(X_S)$
\Repeat
  \State \added{$B_{\mathrm{pass}}\gets B$}
  \State $B_{\mathrm{u}} \gets \emptyset$
  \ForAll{$b\in \replaced{B}{B_{\mathrm{pass}}}$} \Comment{safety check}
    \If{$\nexists\,u\in\mathcal U$ s.t.\ $\forall\,\delta\in\Delta:\ F(b,u,\delta)\in\Down B$}
      \State \change{$B_{\mathrm{u}} \gets B_{\mathrm{u}} \cup\{b\}$}
      \State \change{$B \gets B\setminus\{b\}$}
      \ForAll{$j\in\{1,\ldots,d\}$} \Comment{neighbor gen.}
        \State $c \gets b-e_j$
        \If{$c_j\geq 1$ and $c\notin\Down B$}
          \State $B \gets B\cup\{c\}$
        \EndIf
      \EndFor
    \EndIf
  \EndFor
\Until{$B_{\mathrm{u}}=\emptyset$}
\State \Return $\Down B$
\end{algorithmic}
\end{algorithm}

\section{Threshold-function iteration}\label{sec:method}

\Cref{sec:method:rep} introduces the threshold-function representation of lower sets. \Cref{sec:method:iter} reformulates one GFP step as a parallel collection of binary searches. \Cref{sec:method:correct} shows equivalence with Algorithm~\ref{alg:lazy} and analyzes complexity.

\subsection{The threshold-function representation}\label{sec:method:rep}

The main idea is to project a $d$-dimensional lower set $K\subseteq X$ into a $(d{-}1)$-dimensional table using a designated axis.

\begin{definition}[Designated axis, projection, threshold function]\label{def:tau}
Fix a \emph{designated axis} $d^*\in\{1,\ldots,d\}$ and set
\[ \mathcal{K} \;:=\; \prod_{j\ne d^*}\{1,\ldots,N_j\}. \]
For $x\in X$ let $\pi(x)\in\mathcal{K}$ be its projection onto the remaining $d-1$ axes, and write $x_{d^*}$ for its coordinate along $d^*$.
For a lower set $K\subseteq X$, the \emph{threshold function} is
\begin{equation}\label{eq:tau}
  \tau_K(k) \;:=\; \max\bigl\{x_{d^*} : x\in K,\ \pi(x)=k\bigr\}, \qquad k\in\mathcal{K},
\end{equation}
with the convention $\max\emptyset = 0$.
\end{definition}

Each $k\in\mathcal{K}$ indexes a one-dimensional \emph{column} of cells along the designated axis; lower-closedness forces the safe cells in that column to form a contiguous prefix $\{1,\ldots,\tau_K(k)\}$.
The function $\tau_K$ therefore stores the entire set $K$ in a single integer per column as depicted in \Cref{fig:threshold-2d}.

\begin{proposition}[Threshold characterization]\label{thm:membership}
Let $K\subseteq X$ be a lower set. Then
\[ x\in K \;\iff\; x_{d^*} \le \tau_K(\pi(x)). \]
Moreover, $\tau_K$ is \emph{antitone}: $k\preceq k'$ implies $\tau_K(k)\ge \tau_K(k')$.
\end{proposition}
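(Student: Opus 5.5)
The argument is elementary and uses only the definition of a lower set together with \eqref{eq:tau}. I would prove the two directions of the membership equivalence separately and then derive antitonicity from the characterization.

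\emph{Forward direction.} Let $x\in K$ and set $k:=\pi(x)$. Then $x$ lies in the set over which the maximum in \eqref{eq:tau} is taken, so this set is nonempty and $x_{d^*}\le\tau_K(k)$.

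\emph{Reverse direction.} Suppose $x_{d^*}\le\tau_K(\pi(x))$ and write $k:=\pi(x)$. Since $x\in X$, we have $x_{d^*}\ge 1$, so $\tau_K(k)\ge 1$. Hence the column set $\{y\in K:\pi(y)=k\}$ is nonempty; otherwise the convention $\max\emptyset=0$ would give $\tau_K(k)=0$. Because the set is finite, the maximum is attained at some $y\in K$ with $\pi(y)=k$ and $y_{d^*}=\tau_K(k)$. The points $x$ and $y$ agree on every axis $j\neq d^*$, and $x_{d^*}\le y_{d^*}$. Therefore $x\preceq y$ in the componentwise order of \Cref{def:order}, and lower-closedness of $K$ gives $x\in K$. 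The only care needed here is the empty-column case, which the convention handles because grid coordinates start at $1$.

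\emph{Antitonicity.} Let $k\preceq k'$ in $\mathcal{K}$. If $\tau_K(k')=0$, there is nothing to prove, since $\tau_K\ge 0$. Otherwise, pick $y'\in K$ with $\pi(y')=k'$ and $y'_{d^*}=\tau_K(k')$. Define $y\in X$ by $\pi(y)=k$ and $y_{d^*}=\tau_K(k')$; this point lies in $X$ because $1\le\tau_K(k')\le N_{d^*}$. Then $y\preceq y'$, so $y\in K$ by lower-closedness. The forward direction now gives $\tau_K(k)\ge y_{d^*}=\tau_K(k')$.

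The whole proof is a routine unpacking of definitions. The one point to flag is checking that the constructed points lie in the grid $X$.
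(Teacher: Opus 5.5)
Your proof is correct and follows the same argument as the paper. The forward direction comes from the definition of the maximum, the converse comes from lifting $x$ to a maximizer in the same column, and antitonicity comes from shifting a point of column $k'$ to column $k$ at the same height. Your explicit handling of the empty-column case (using $x_{d^*}\ge 1$ and $\max\emptyset=0$) and of the $\tau_K(k')=0$ case fills in details the paper leaves implicit.
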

\begin{proof}
If $x\in K$, the value $x_{d^*}$ is one of the values maximized in~\eqref{eq:tau} at $k=\pi(x)$, so $\tau_K(\pi(x))\ge x_{d^*}$.
Conversely, if $x_{d^*}\le\tau_K(\pi(x))$, pick $y\in K$ with $\pi(y)=\pi(x)$ and $y_{d^*}=\tau_K(\pi(x))$; then $x\preceq y$ and lower-closedness gives $x\in K$.
Antitonicity follows by the same lifting: for $k\preceq k'$, any $x\in K$ with $\pi(x)=k'$ and $x_{d^*}=v$ dominates $y$ with $\pi(y)=k,\ y_{d^*}=v$, hence $y\in K$ and $\tau_K(k)\ge v$.
\end{proof}
\begin{remark}[Memory and query complexity]\label{rem:memory}
$\tau_K$ has $|\mathcal{K}| = \prod_{j\ne d^*} N_j = O(N^{d-1})$ entries.
Storage is $O(N^{d-1})$ and every inclusion query is $O(1)$ (one projection and one comparison), compared to $O(N^{d-1})$ for a scan-based basis check.
\end{remark}

\begin{figure}[t]
\centering
\begin{minipage}[c]{0.60\columnwidth}
\centering
\begin{tikzpicture}[scale=0.41]
\fill[safeblue!12] (0,0) -- (0,7) -- (3,7) -- (3,6) -- (4,6) -- (4,4) -- (6,4) -- (6,3) -- (6,2) -- (7,2) -- (7,0) -- cycle;
\draw[red!70!black, line width=1.4pt] (0,7) -- (3,7) -- (3,6) -- (4,6) -- (4,4) -- (6,4) -- (6,3) -- (6,2) -- (7,2) -- (7,0);
\draw[-{Stealth}, thick] (0,0) -- (0,8.2);
\draw[-{Stealth}, thick] (0,0) -- (8.2,0);
\draw[very thin, gray!20] (0,0) grid (8,8);
\node[below, font=\scriptsize] at (4.7,0.2) {key $k$};
\node[rotate=90, font=\scriptsize, anchor=south] at (-0.05,3.5) {height $v$};
\draw[decorate, decoration={brace, amplitude=2.5pt}, thick, red!70!black] (4.6,0.1) -- (4.6,3.9) node[midway, left=1pt, font=\scriptsize] {$\tau_K(k)$};
\node[font=\normalsize] at (2,3) {$K$};
\node[font=\footnotesize] at (4,-0.9) {(a) Threshold representation};
\end{tikzpicture}
\end{minipage}\hfill
\begin{minipage}[c]{0.34\columnwidth}
\centering
\begin{tikzpicture}[scale=0.38]
\foreach \v in {1,...,8} {\draw[gray!30] (0.5,{\v-0.5}) rectangle (1.5,{\v+0.5});}
\foreach \v in {1,...,5} {\fill[blue!8] (0.52,{\v-0.48}) rectangle (1.48,{\v+0.48}); \node[font=\footnotesize] at (1.0,\v) {$\v$};}
\foreach \v in {6,7,8} {\fill[red!4] (0.52,{\v-0.48}) rectangle (1.48,{\v+0.48}); \node[font=\footnotesize, gray!50] at (1.0,\v) {$\v$};}
\draw[red!70!black, line width=1.2pt] (0.3,5.5) -- (1.7,5.5);
\node[left, red!70!black, font=\scriptsize] at (0.2,5.5) {$\tau(k)$};
\draw[decorate, decoration={brace, amplitude=2.5pt, mirror}, thick, blue!60!black] (1.9,0.55) -- (1.9,5.45) node[midway, right=2pt, font=\scriptsize] {prefix};
\draw[-{Stealth[length=4pt]}, thick, orange!70!black] (3.2,5.0) -- (2.05,5.0);
\node[right, orange!70!black, font=\scriptsize] at (3.2,5.0) {$f(k)$};
\node[font=\footnotesize] at (1.8,-0.9) {(b) Column update};
\end{tikzpicture}
\end{minipage}
\caption{Threshold representation and update. (a) A lower set $K$ is stored by the column height $\tau_K(k)$, so membership is the $O(1)$ test $(k,v)\in K\iff v\le\tau_K(k)$. (b) For each column, \replaced{$\Pre(K)$}{$\Phi(K)$} is a prefix whose height $f(k)$ is found by binary search; all columns are updated independently.}
\label{fig:threshold-2d}
\label{fig:column-safety}
\end{figure}


\subsection{The threshold iteration}\label{sec:method:iter}

The representation of \Cref{sec:method:rep} turns one GFP step into a set of independent binary searches along the designated axis.
We first state the structural property that enables this (\Cref{lem:column}), then derive the per-column update (\Cref{cor:safe-prefix}).

\begin{definition}[Grid column]\label{def:column}
For $k\in\mathcal{K}$, the \emph{column at $k$} is $\mathrm{Col}(k) := \{(k,v) : v\in\{1,\ldots,N_{d^*}\}\}$.
\end{definition}

\begin{lemma}[Column-wise predecessor prefix]\label{lem:column}
If $K\subseteq X$ is a lower set and
\replaced{$(k,v)\in\Pre(K)$}{$(k,v)\in\Phi(K)$}, then \replaced{$(k,v')\in\Pre(K)$}{$(k,v')\in\Phi(K)$} for every $v'\in\{1,\ldots,v\}$.
\end{lemma}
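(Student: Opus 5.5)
This is a direct consequence of \Cref{prop:pre-lower}. The plan is to reduce the column statement to lower-closedness of $\Pre(K)$ and then read off the prefix property from the componentwise order. I will write points of $X$ as pairs $(k,v)$ with $k=\pi(x)\in\mathcal{K}$ and $v=x_{d^*}$, as in \Cref{def:tau,def:column}.

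First, note that $(k,v')\preceq(k,v)$ whenever $1\le v'\le v$. The two points agree on every axis $j\ne d^*$, since both project to $k$. Along the designated axis, $v'\le v$. So the componentwise order of \Cref{def:order} holds coordinate by coordinate.

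Second, invoke \Cref{prop:pre-lower}. Since $K$ is a lower set and $F$ is state-monotone, $\Pre(K)=\Phi(K)$ is a lower set. From $(k,v)\in\Pre(K)$ and $(k,v')\preceq(k,v)$ it then follows that $(k,v')\in\Pre(K)$, which is the claim.

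Alternatively, the argument can be made self-contained by repeating the proof of \Cref{prop:pre-lower} for this particular pair. Lower-closedness of $K$ puts $(k,v')$ in $K$. Then take the witnessing control $u$ for $(k,v)$. State monotonicity gives $F((k,v'),u,\delta)\preceq F((k,v),u,\delta)\in K$ for every $\delta\in\Delta$, and lower-closedness of $K$ again places each successor in $K$.

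There is no real obstacle. The only points needing care are that the hypotheses of \Cref{prop:pre-lower} are in force, namely state monotonicity and $K$ being a lower set, and that the restriction ``$q\in A$'' in \eqref{eq:pre} with $A=K$ is also inherited downward. Both are handled by lower-closedness of $K$. Together with \Cref{thm:membership}, this shows that $\Pre(K)\cap\mathrm{Col}(k)$ is a contiguous prefix $\{1,\ldots,h\}$, which is the property that makes the subsequent binary search valid.
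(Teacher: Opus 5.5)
Your proposal is correct and is essentially the paper's argument. The paper's proof is your self-contained version: take the witness control $u$ for $(k,v)$, apply state monotonicity, then use lower-closedness of $K$; this is just \Cref{prop:pre-lower} specialized to the pair $(k,v')\preceq(k,v)$. Your explicit check that $(k,v')\in K$, which is needed because of the restriction $q\in A$ in \eqref{eq:pre}, is a step the paper's proof leaves implicit.
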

\begin{proof}
Let $u\in\mathcal U$ be a witness for \replaced{$(k,v)\in\Pre(K)$}{$(k,v)\in\Phi(K)$}, so $F((k,v),u,\delta)\in K$ for every $\delta\in\Delta$.
Since $(k,v')\preceq(k,v)$, monotonicity yields $F((k,v'),u,\delta)\preceq F((k,v),u,\delta)\in K$, and lower-closedness of $K$ gives $F((k,v'),u,\delta)\in K$.
Hence \replaced{$(k,v')\in\Pre(K)$}{$(k,v')\in\Phi(K)$}.
\end{proof}

\Cref{lem:column} says that, within any column, the set of states from which $K$ can be maintained is a contiguous prefix $\{(k,1),\ldots,(k,f(k))\}$.
This height is efficiently computable:

\begin{corollary}[Per-column threshold update]\label{cor:safe-prefix}
Let $K\subseteq X$ be lower with threshold function $\tau=\tau_K$.
For each $k\in\mathcal{K}$, define
\begin{equation}\label{eq:fk}
  \begin{aligned}
    f(k) := \max\Bigl\{&v\in\{1,\ldots,\tau(k)\}:\ \exists\,u\in\mathcal U\ \forall\,\delta\in\Delta, \\
    &F((k,v),u,\delta)\in K\Bigr\},
  \end{aligned}
\end{equation}
with $\max\emptyset:=0$.
Then
\[
  \replaced{\Pre(K)}{\Phi(K)}\cap\mathrm{Col}(k)=\{(k,1),\ldots,(k,f(k))\}.
\]
Moreover, $f(k)$ can be computed by binary search over $\{1,\ldots,\tau(k)\}$.
Each probe requires one membership query in $K$ per pair $(u,\delta)$, so the cost per column is
$
  O\!\bigl(C\,\log N\bigr).
$

\end{corollary}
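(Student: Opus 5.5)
The plan is to identify the column slice of $\Pre(K)$ directly, in three steps. First, by \Cref{def:pre}, a cell $(k,v)$ lies in $\Pre(K)=\Pre(K,\mathcal U,K)$ iff two conditions hold:
\begin{itemize}
\item $(k,v)\in K$;
\item there exists $u\in\mathcal U$ such that $F((k,v),u,\delta)\in K$ for all $\delta\in\Delta$.
\end{itemize}
By \Cref{thm:membership}, the first condition is equivalent to $v\le\tau(k)$. Hence
\[
\Pre(K)\cap\mathrm{Col}(k)=\{(k,v): v\in\{1,\ldots,\tau(k)\},\ \exists u\,\forall\delta\ F((k,v),u,\delta)\in K\}.
\]
This is exactly the set over which the maximum in~\eqref{eq:fk} is taken, viewed as a set of cells rather than heights. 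Call the corresponding set of heights $S_k$, so that $f(k)=\max S_k$.

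Second, \Cref{lem:column} shows that $S_k$ is downward closed in $\{1,\ldots,N_{d^*}\}$: if $v\in S_k$, then every $v'\le v$ is also in $S_k$. A nonempty downward-closed subset of $\{1,\ldots,N_{d^*}\}$ is $\{1,\ldots,\max S_k\}$, so $S_k=\{1,\ldots,f(k)\}$. If $S_k$ is empty, the convention $\max\emptyset=0$ gives the empty prefix. This proves the set identity.

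Third, the same prefix property says the predicate $P(v):=[v\in S_k]$ is monotone on $\{1,\ldots,\tau(k)\}$: it is true up to $f(k)$ and false afterwards. Standard binary search therefore finds the last true index in $\lceil\log_2(\tau(k)+1)\rceil=O(\log N)$ probes, since $\tau(k)\le N_{d^*}\le N$. A probe at height $v$ evaluates $P(v)$ as follows:
\begin{itemize}
\item enumerate the pairs $(u,\delta)\in\mathcal U\times\Delta$, or the reduced sets of \Cref{cor:extremal} when CSM, DSM or CDSM holds;
\item for each pair, compute the successor $F((k,v),u,\delta)$ and test $F((k,v),u,\delta)_{d^*}\le\tau(\pi(F((k,v),u,\delta)))$, which by \Cref{rem:memory} takes $O(1)$ time.
\end{itemize}
This costs $O(C)$ per probe, assuming successor evaluation is $O(1)$. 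Membership of $(k,v)$ itself in $K$ needs no test, because the search range is restricted to $v\le\tau(k)$. Multiplying gives $O(C\log N)$ per column.

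The only delicate point, and what I expect to be the main obstacle, is the edge handling. The search must return $0$ when $P(1)$ fails or $\tau(k)=0$. To handle this, I would run the search on the interval $[0,\tau(k)]$ with $P(0)$ treated as true by convention, which keeps the invariant clean. Beyond that, the argument is a direct combination of \Cref{thm:membership} and \Cref{lem:column}.
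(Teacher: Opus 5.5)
Your proof is correct and takes the same route as the paper: \Cref{lem:column} gives the prefix structure of each column slice, and monotonicity of the predicate in~\eqref{eq:fk} justifies binary search with $O(C)$ work per probe. Your version is more explicit than the paper's one-line argument: you use \Cref{thm:membership} to show that restricting $v$ to $\{1,\ldots,\tau(k)\}$ encodes exactly the $q\in K$ clause of $\Phi(K)$, and you treat the $\max\emptyset=0$ edge case.
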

\begin{proof}
By \Cref{lem:column}, \replaced{$\Pre(K)\cap\mathrm{Col}(k)$}{$\Phi(K)\cap\mathrm{Col}(k)$} is a prefix of $\mathrm{Col}(k)$ of length equal to~\eqref{eq:fk} and
the predicate in~\eqref{eq:fk} is monotone non-increasing in $v$, so a standard binary search over $\{1,\ldots,\tau(k)\}$ locates $f(k)$ in $O(\log N)$ probes.
\end{proof}

Since the update $\tau(k)\leftarrow f(k)$ is independent across columns, all columns can be processed in parallel as in \Cref{fig:column-safety}. The full iteration is summarized in Algorithm~\ref{alg:threshold}.

\begin{algorithm}[t]
\caption{Threshold-function iteration}
\label{alg:threshold}
\begin{algorithmic}
\Require Lower set $X_S$, transition $F$, inputs $\mathcal U$, adversarial set $\Delta$, designated axis $d^*$
\State $\tau \gets \tau_{X_S}$
\Repeat
  \ForAll{$k\in\mathcal{K}$} \Comment{all columns are independent}
    \State $f(k) \gets \textsc{BinSearch}(k,\tau,F, \mathcal{U}, \Delta)$ \Comment{using \eqref{eq:fk}}
  \EndFor
  \State $\tau \gets f$
\Until{$\tau$ unchanged}
\State \Return $K_\tau = \{x\in X : x_{d^*}\le \tau(\pi(x))\}$
\end{algorithmic}
\end{algorithm}

\newcommand{\rmark}[1]{\nolinebreak\hfill\tikzmark{#1}\relax}
\newcommand{\algbraceR}[3]{%
  \begin{tikzpicture}[overlay, remember picture]
    \draw[thick, decorate, decoration={brace, amplitude=3pt}]
      ([xshift=4pt, yshift=3pt]pic cs:#1) -- ([xshift=4pt, yshift=-2pt]pic cs:#2)
      node[midway, right=4pt, font=\scriptsize, align=left, inner sep=1pt] {#3};
  \end{tikzpicture}%
}

\subsection{Correctness, convergence, and complexity}\label{sec:method:correct}

We now prove that Algorithm~\ref{alg:threshold} computes the same invariant set as Algorithm~\ref{alg:lazy}.

\begin{theorem}[One-step equivalence]\label{thm:one-step}
Let $K\subseteq X$ be a lower set with threshold $\tau=\tau_K$, and let $f$ be defined by~\eqref{eq:fk}. Then, \Cref{ass:grid} and state monotonicity (\Cref{def:state-monotone}) imply:
\[ \replaced{\Pre(K)}{\Phi(K)} \;=\; \bigl\{x\in X : x_{d^*}\le f(\pi(x))\bigr\}, \
\mbox{i.e.}\ {f = \tau_{\Phi(K)}}.
\]
\end{theorem}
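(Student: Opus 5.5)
The plan is to prove the set equality column by column, since the columns $\mathrm{Col}(k)$, $k\in\mathcal{K}$, partition $X$. For fixed $k$ it suffices to show
\[
\Phi(K)\cap\mathrm{Col}(k)=\{(k,v): v\le f(k)\},
\]
and this is essentially \Cref{cor:safe-prefix}. I would nevertheless unpack it so that the role of the truncation $v\le\tau(k)$ in~\eqref{eq:fk} is explicit.

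For the inclusion ``$\subseteq$'', take $x=(k,v)\in\Phi(K)=\Pre(K,\mathcal U,K)$. By \eqref{eq:pre} we have $x\in K$, so \Cref{thm:membership} gives $v\le\tau(k)$. There is also a witness $u$ with $F(x,u,\delta)\in K$ for all $\delta$. Hence $v$ belongs to the set maximized in~\eqref{eq:fk}, and $v\le f(k)$.

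For the inclusion ``$\supseteq$'', suppose $v\le f(k)$, which forces $f(k)\ge 1$. The maximizer $v^\star=f(k)$ satisfies $v^\star\le\tau(k)$, so $(k,v^\star)\in K$ by \Cref{thm:membership}. It also has a witness control, so $(k,v^\star)\in\Phi(K)$. \Cref{lem:column}, which uses state monotonicity and lower-closedness of $K$, then gives $(k,v)\in\Phi(K)$ for every $1\le v\le v^\star$. If $f(k)=0$, both sides are empty on that column.

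To obtain $f=\tau_{\Phi(K)}$, I would note that $\Phi(K)$ is a lower set by \Cref{prop:pre-lower}, so $\tau_{\Phi(K)}$ is defined by~\eqref{eq:tau}. The column identity shows that the largest $x_{d^*}$ in $\Phi(K)\cap\mathrm{Col}(k)$ is exactly $f(k)$, with the empty case handled by the convention $\max\emptyset=0$ in both definitions.

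The main subtlety is the requirement $q\in A=K$ in the restricted predecessor: a state can have all its successors in $K$ without lying in $K$. This is exactly why~\eqref{eq:fk} restricts $v$ to $\{1,\ldots,\tau(k)\}$, and why the ``$\subseteq$'' direction must invoke \Cref{thm:membership} before anything else. Once that is handled, the rest is direct.
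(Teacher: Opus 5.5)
Your proposal is correct and follows the paper's proof, which obtains the column identity $\Phi(K)\cap\mathrm{Col}(k)=\{(k,1),\ldots,(k,f(k))\}$ from \Cref{cor:safe-prefix}, takes the union over $k\in\mathcal{K}$, and reads off $f=\tau_{\Phi(K)}$. The only difference is that you unpack that corollary, making explicit how the truncation $v\le\tau(k)$ in~\eqref{eq:fk} encodes the requirement $q\in K$ of the restricted predecessor, which the paper's corollary proof leaves implicit.
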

\begin{proof}
By \Cref{cor:safe-prefix}, $\Phi(K)\cap\mathrm{Col}(k) = \{(k,1),\ldots,(k,f(k))\}$ for every $k\in\mathcal{K}$; taking the union over $k$ gives $\Phi(K)=\{x:x_{d^*}\le f(\pi(x))\}$.
Combining this with \Cref{thm:membership} yields $f=\tau_{\Phi(K)}$.
\end{proof}

\begin{corollary}[Correctness and convergence of Algorithm~\ref{alg:threshold}]\label{cor:correct}
Let $\tau_0:=\tau_{X_S}$ and, for each $i\in\N$, let $\tau_{i+1}$ be obtained from $\tau_i$ by the update~\eqref{eq:fk}.
Define
\[
  K_i := \{x\in X : x_{d^*}\le \tau_i(\pi(x))\}.
\]
Then $K_0=X_S$, each $K_i$ is a lower set, and
\[
  K_{i+1}=\replaced{\Pre(K_i)}{\Phi(K_i)} \qquad \forall\, i\in\N.
\]
The sequence $(K_i)_{i\in\N}$ is descending and converges in finitely many steps to the maximal robust controlled invariant subset $K_*$ of $X_S$.
Equivalently, Algorithm~\ref{alg:threshold} returns $K_*$.
\end{corollary}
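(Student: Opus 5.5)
The plan is an induction on $i$ that ties the threshold iterates to the predecessor sequence of \Cref{thm:descending-seq}. Once $K_{i+1}=\Phi(K_i)$ and $K_0=X_S$ are established, the $K_i$ coincide term by term with that sequence. Descent, finite convergence, and maximality then follow from \Cref{thm:descending-seq} without further work.

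\emph{Base case.} By \Cref{ass:lower-safe}, $X_S$ is a lower set. So \Cref{thm:membership} applies with $K=X_S$: for any $x\in X$, $x\in X_S$ holds iff $x_{d^*}\le\tau_{X_S}(\pi(x))=\tau_0(\pi(x))$. Hence $K_0=X_S$, and $K_0$ is lower.

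\emph{Inductive step.} Assume $K_i$ is a lower set and $\tau_i=\tau_{K_i}$. The second half of this hypothesis does not come for free from the definition of $K_i$. It follows from \Cref{thm:membership} together with the way $\tau_i$ was produced: in the base case $\tau_0=\tau_{X_S}$ by definition, and in later steps it is supplied by the previous inductive step. The update~\eqref{eq:fk} applied to $\tau_i$ yields exactly the $f$ of \Cref{thm:one-step} for $K=K_i$. That theorem then gives two facts. First, $\Phi(K_i)=\{x: x_{d^*}\le\tau_{i+1}(\pi(x))\}=K_{i+1}$. Second, $\tau_{i+1}=\tau_{\Phi(K_i)}=\tau_{K_{i+1}}$, which re-establishes the invariant. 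Lower-closedness of $K_{i+1}$ follows from \Cref{prop:pre-lower}. This is the one place where state monotonicity and the lower-closedness of $K_i$ are both used.

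\emph{Conclusion and the algorithm's return value.} With $K_{i+1}=\Phi(K_i)$ and $K_0=X_S$, the sequence $(K_i)$ is exactly the sequence of \Cref{thm:descending-seq}. So it is descending and stabilizes at $K_*$ after finitely many steps. It remains to check that Algorithm~\ref{alg:threshold} halts at the right moment. The correspondence $K\mapsto\tau_K$ is injective on lower sets by \Cref{thm:membership}, and the invariant $\tau_i=\tau_{K_i}$ holds. Therefore ``$\tau$ unchanged'' is equivalent to $K_{i+1}=K_i$, i.e.\ to reaching the fixed point $K_*$. At that point the returned set $K_\tau$ is $K_*$.

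\emph{Main obstacle.} The delicate point is maintaining the invariant $\tau_i=\tau_{K_i}$ across iterations. \Cref{thm:one-step} assumes its input threshold is the true threshold of a lower set, so we must verify that $f$ computed from $\tau_i$ is again an exact threshold and not merely some upper bound. This is exactly the content of $f=\tau_{\Phi(K)}$ in \Cref{thm:one-step}, so I expect the step to go through directly. Note also that the restriction $v\le\tau(k)$ in~\eqref{eq:fk} loses nothing, since $\Phi(K)\subseteq K$.
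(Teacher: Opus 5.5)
Your proof is correct and follows the paper's argument. The base case $K_0=X_S$ comes from \Cref{thm:membership}, \Cref{thm:one-step} is applied inductively to get $K_{i+1}=\Phi(K_i)$ together with $\tau_{i+1}=\tau_{K_{i+1}}$ and lower-closedness, and \Cref{thm:descending-seq} then supplies descent, finite convergence and maximality. You go further than the paper only in making the induction and the halting test of Algorithm~\ref{alg:threshold} explicit. One small correction: the invariant $\tau_i=\tau_{K_i}$ does in fact follow directly from the definition of $K_i$, because every $\tau_i$ takes values in $\{0,\ldots,N_{d^*}\}$, so carrying it through the induction is harmless but unnecessary.
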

\begin{proof}
The identity $K_0=X_S$ follows from the definition of $\tau_{X_S}$ together with \Cref{thm:membership}.
Since $X_S$ is a lower set and \Cref{thm:one-step} maps the threshold of a lower set to the threshold of its predecessor, each $K_i$ is a lower set and satisfies $K_{i+1}=\Phi(K_i)$.
The conclusion now follows from \Cref{thm:descending-seq}.
\end{proof}

\begin{theorem}[Equivalence with the lazy basis formulation]\label{thm:equiv}
Let $K^{\mathrm{lazy}}$ denote the set returned by Algorithm~\ref{alg:lazy} and let $K^{\mathrm{thr}}$ denote the set returned by Algorithm~\ref{alg:threshold}, both started from the same lower set $X_S$.
Then $
  K^{\mathrm{lazy}} = K^{\mathrm{thr}} = K_*.
$
\end{theorem}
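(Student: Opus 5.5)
The equality $K^{\mathrm{thr}}=K_*$ is already given by \Cref{cor:correct}, so the work is to show $K^{\mathrm{lazy}}=K_*$. I will treat Algorithm~\ref{alg:lazy} as a chaotic (asynchronous) iteration of $\Phi$ on lower sets. I will track the lower set $L:=\Down B$ after every elementary update. Three facts are needed: (i) $L$ is always a lower set with $L\subseteq X_S$ and $L$ never grows; (ii) $K_*\subseteq L$ at every moment; (iii) at termination $L$ is controlled invariant. From (iii) and maximality (\Cref{thm:descending-seq}) we get $L\subseteq K_*$, and (ii) gives the reverse inclusion.

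\textbf{Step 1 (the basis update).} When an element $b\in B$ is removed and the neighbors $b-e_j$ not already in $\Down B$ are inserted, I will show the new down-closure is exactly $L\setminus\{b\}$. This is the content of the update $\Bas(\Down B\setminus\{b\})$ cited from~\cite{saoud2019efficient}, and I would verify it directly.

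For $\supseteq$: take $q\in L\setminus\{b\}$. Then $q\preceq b'$ for some $b'\in B$. If $b'\neq b$, we are done. If $b'=b$, then $q\prec b$, so $q\preceq b-e_j$ for some $j$ with $q_j<b_j$. That neighbor is either inserted or already lies in the down-closure of $B\setminus\{b\}$. The latter holds because $B$ is an antichain, so nothing else dominates $b$ and hence not $b-e_j$ via $b$.

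For $\subseteq$: the elements $b-e_j$ lie in $L$ and differ from $b$. Moreover, $b$ itself is not dominated by any remaining element, since $B$ was an antichain containing $b$ as a maximal element.

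Hence each removal deletes exactly one maximal point from $L$, which keeps $L$ a lower set and gives (i). I also need that $B$ stays an antichain, or at least that the down-closure is correct. Inserting only non-dominated neighbors guarantees that no new element is dominated. An old element cannot be dominated by $b-e_j$ unless it was dominated by $b$, which the antichain property rules out.

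\textbf{Step 2 (invariant (ii)).} Initially $L=X_S\supseteq K_*$. A point $b$ is removed only when no $u$ satisfies $F(b,u,\delta)\in L$ for all $\delta$. Suppose inductively that $K_*\subseteq L$, and suppose for contradiction that $b\in K_*$. Controlled invariance of $K_*$ gives some $u$ with $F(b,u,\delta)\in K_*\subseteq L$ for all $\delta$, which contradicts the removal test. So $b\notin K_*$, and the invariant is preserved. The safety test uses the current, evolving $B$, but this argument works at every elementary step regardless.

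\textbf{Step 3 (termination and invariance).} Each removal strictly shrinks the finite set $L$, so only finitely many removals occur. A pass with no removals therefore happens eventually, and the algorithm stops. In that final pass $B$ is fixed and every $b\in B$ passes the test, so each maximal element of $L$ lies in $\Phi(L)$. By \Cref{prop:pre-lower}, $\Phi(L)$ is a lower set, so $L=\Down B\subseteq\Phi(L)\subseteq L$. Thus $L$ is controlled invariant and contained in $X_S$, which completes the argument via maximality.

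I expect the main obstacle to be Step~1: proving rigorously that the explicit neighbor-insertion rule realizes $\Down B\setminus\{b\}$ and preserves the antichain property under interleaved sequential updates. Steps~2 and~3 are standard once that invariant is in place.
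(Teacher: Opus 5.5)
Your proof is correct. For the lazy half it takes a genuinely different route from the paper.

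The paper's proof is two lines. It gets $K^{\mathrm{thr}}=K_*$ from \Cref{cor:correct}, exactly as in your opening. It then imports $K^{\mathrm{lazy}}=K_*$ wholesale from the completeness result of~\cite{saoud2019efficient}.

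You instead prove lazy completeness directly, by a chaotic-iteration sandwich:
\begin{itemize}
\item Each elementary update removes exactly one maximal point from $L=\Down B$.
\item $K_*\subseteq L$ is preserved, because a point of $K_*$ can never fail the test against a superset of $K_*$. This step needs no monotonicity at all.
\item At termination every basis element lies in $\Phi(L)$, and \Cref{prop:pre-lower} upgrades this to $L\subseteq\Phi(L)$. This is the only place state monotonicity is used.
\end{itemize}

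The paper's route is shorter. However, it tacitly assumes the cited theorem covers this particular pseudo-code: the pass snapshot $B_{\mathrm{pass}}$, immediate deletion, and safety tests against the evolving $B$. The paper only asserts that this pseudo-code implements $\Bas(\Down B\setminus\{b\})$. Your Step~1 actually checks that assertion. Your argument also shows the outcome is independent of the update schedule, which explains why $R_{\mathrm{lazy}}$ and $R_{\mathrm{thr}}$ are not comparable.

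Two small tidy-ups:
\begin{itemize}
\item In the $\supseteq$ part of Step~1, your stated reason is off target. A non-inserted neighbor $c=b-e_j$ still covers $q$ simply because the test $c\notin\Down B$ runs after $B\gets B\setminus\{b\}$, and $B$ only grows during the neighbor loop. The antichain property is not what is needed there. It is needed for $\subseteq$ and for antichain preservation, where you should also note that distinct neighbors $b-e_j$ and $b-e_{j'}$ are incomparable.
\item In Step~3, say explicitly that a pass with no removals performs no insertions. Hence $B=B_{\mathrm{pass}}$ throughout that pass, and every test was run against the same fixed $B$.
\end{itemize}
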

\begin{proof}
By \deleted{\mbox{\Cref{thm:one-step}}}\added{\Cref{cor:correct}}, Algorithm~\ref{alg:threshold} returns the maximal robust controlled invariant subset $K_*$ of $X_S$.
By the completeness result for the lazy algorithm in~\cite{saoud2019efficient}, Algorithm~\ref{alg:lazy} returns the same maximal invariant set.
\end{proof}

The threshold iteration replaces quadratic lazy boundary work with one embarrassingly parallel stage;
{see \Cref{tab:complexity}.}

{
\begin{table}[t]
\centering
\caption{Per-iteration complexity comparison.}
\label{tab:complexity}
\setlength{\tabcolsep}{2pt}
\renewcommand{\arraystretch}{1.1}
\footnotesize
\begin{tabular}{@{}p{2.5cm}cc@{}}
\toprule
 & \textbf{Lazy}~\cite{saoud2019efficient} & \textbf{Threshold (ours)} \\
\midrule
Predecessor tests & $O(CN^{2(d-1)})$ & $O(CN^{d-1}\log N)$ \\
                  & \change{\emph{sequential} over evolving $B$} & parallel over columns \\[1pt]
Basis update      & $O(N^{2(d-1)})$ & --- \\
                  & \emph{sequential} & --- \\[1pt]
\bottomrule
\end{tabular}
\end{table}
}
\begin{remark}[Dense storage and designated-axis choice]
\ifshowdeletions
\textcolor{blue}{\sout{The designated coordinate may be kept continuous and a real-valued threshold approximated by bisection.}}
\fi
\added{Let $M_{d^*}:=\prod_{j\ne d^*}N_j$. A threshold buffer stores $M_{d^*}$ values regardless of frontier complexity, whereas a sparse basis stores $|B|$ maximal grid points and may require substantially less memory when $|B|\ll M_{d^*}$. The threshold representation is  attractive for dense or irregular frontiers and workloads dominated by  membership queries. Memory can be minimized by choosing an axis with maximal $N_{d^*}$.}
\end{remark}

\section{Experiments}\label{sec:experiments}

We evaluate threshold iteration on three $d{=}3$ benchmarks and a $d{=}5$ benchmark from the monotone-synthesis literature.

\textbf{ACC} --- adaptive cruise control~\cite{saoud2019efficient}.
The 3D state is $(h,v_e,v_\ell)$ (headway, ego speed, lead speed), with ego torque command $u$ and adversarial mode $\delta$ that captures lead-vehicle braking and parameter uncertainty.
The ACC dynamics are
\[
\dot h=v_\ell-v_e,\qquad
\dot v_e=a_e(v_e,u,\delta),\qquad
\dot v_\ell=a_\ell(v_\ell,\delta).
\]
ACC-5D uses state $(h,v_e,v_\ell,T_e,T_\ell)$, replaces direct torque arguments by actuator states in $a_e,a_\ell$, and adds $\dot T_e=(u-T_e)/\tau_e,\ \dot T_\ell=(\delta_T-T_\ell)/\tau_\ell$.
The acceleration model is
\begin{equation}\label{eq:acc-accel}
a(v, T, \theta) = \frac{1}{M}\!\left(\frac{T}{R_w} - \alpha\right) - \frac{\beta}{M}v - \frac{\gamma}{M}v^2,
\end{equation}
where $\theta = (M, R_w, \alpha, \beta, \gamma) \in \Theta$.
The safe set is $K_{\mathrm{ACC}}=\{(h,v_e,v_\ell): h\ge d_b(v_e)\}$, where $d_b$ is the worst-case braking-distance map.
We use the orthant order
\[
x\preceq_{\mathrm{ACC}} y
\iff
h_x\ge h_y,\ v_{e,x}\le v_{e,y},\ v_{\ell,x}\ge v_{\ell,y},
\]
and, for ACC-5D, additionally $T_{e,x}\le T_{e,y}$ and $T_{\ell,x}\ge T_{\ell,y}$.
$K_{\mathrm{ACC}}$ is lower-closed whenever $d_b$ is nondecreasing.

\textbf{Turn-Ego/Turn-Oncoming.} 
Both unprotected-turn benchmarks~\cite{smith2021monotonicity} use state $(s_e,v_e,s_o)$: ego longitudinal position and speed, plus oncoming longitudinal position along the opposing lane centerline.
There is no exogenous disturbance channel, so we use a singleton adversarial set. The dynamics are
\[
\dot s_e=v_e,\qquad \dot v_e=a_e(v_e,u),\qquad \dot s_o=v_o.
\vspace{-0.1cm}
\]
Let $Z(s):=1+\mathbf 1_{\{s\ge -10\}}+\mathbf 1_{\{s\ge 10\}}$ be the shared zone map and write $Z_e:=Z(s_e),\ Z_o:=Z(s_o)$.
Both benchmarks use the same unprotected-turn geometry for opposite maneuvers: Turn-Oncoming models the ego waiting while the oncoming car passes, Turn-Ego models the ego passing first; the orthant order is flipped between them, thus the ego-priority and oncoming-priority safe sets are
\[
\begin{aligned}
K_{\mathrm{ego}}&=\{x\in X_{\mathrm{turn}}:\neg(Z_e<Z_o\lor (Z_e=2\wedge Z_o=2))\},\\
K_{\mathrm{onc}}&=\{x\in X_{\mathrm{turn}}:\neg(Z_o<Z_e\lor (Z_e=2\wedge Z_o=2))\}.
\end{aligned}
\]
They forbid simultaneous conflict-zone occupation while enforcing the chosen priority relation.
The orthant orders are
\[
\begin{aligned}
x\preceq_{\mathrm E} y
\iff s_{e,x}\ge s_{e,y},\ v_{e,x}\ge v_{e,y},\ s_{o,x}\le s_{o,y},\\
x\preceq_{\mathrm O} y
\iff s_{e,x}\le s_{e,y},\ v_{e,x}\le v_{e,y},\ s_{o,x}\ge s_{o,y}.
\end{aligned}
\]
Minimal controls are maximum torque for $\preceq_{\mathrm E}$ and minimum torque for $\preceq_{\mathrm O}$.
Since $Z$ is monotone, moving downward under $\preceq_{\mathrm E}$ advances ego and delays the oncoming phase, while downward movement under $\preceq_{\mathrm O}$ has the reverse effect; thus, priority constraints are lower-closed, and
monotonicity follows from the dynamics' sign structure~\cite{smith2021monotonicity}.

\subsection{Lazy vs. threshold comparison}

\added{For fair comparison, we also introduce a stronger Lazy+$\tau$ baseline which preserves Algorithm~\ref{alg:lazy}'s traversal and immediate-deletion schedule while maintaining the exact threshold table $\tau_{\Down B}$. Deleting an uncontrolled maximal element $b$ updates $\tau(\pi(b))\leftarrow\tau(\pi(b))-1$. A neighbor $c=b-e_j$ is maximal when $c\in K_\tau$ and every valid immediate successor $c+e_\ell\notin K_\tau$, which is checked using threshold queries. Thus, Lazy+$\tau$ changes the representation and boundary-maintenance operations while retaining the sequential Lazy schedule; the basis is materialized 
after
each pass.} \replaced{\mbox{\Cref{tab:vs-lazy}} compares basis and threshold algorithms at matching grid sizes ($|X| \approx 10^8-10^{14}$; uniform $N$ per dimension).
Lazy time $t$ includes GPU successor-table precomputation plus GFP iteration, which usually makes lazy faster due to the parallelizable precomputation.
This precomputation costs $O(C|X|)$ independent operations and is highly parallel, but makes the lazy baseline faster because lazy synthesis performs more fixed-point iterations and tests successors against the evolving basis.}{Table~\ref{tab:vs-lazy} compares scan-based Lazy, Lazy+$\tau$, and threshold GFP. The two Lazy methods use the same pass, immediate-deletion schedule, and the same precomputed successor cache. The time required to construct this cache is reported as $t_{\rm pre}$. The implementation and experiment configurations are publicly available. \footnote{\url{https://github.com/mkhaled87/pFaces-MonoSynth}}}

The threshold method uses inline dynamics: a full successor table would cost $O(C|X|)$ memory and becomes the bottleneck at larger grids, while threshold membership needs only the $O(N^{d-1})$ table and eliminates antichain maintenance. At $|X|=10^9$, the threshold iteration is already thousands of times faster; at $|X|\ge 10^{10}$ the basis method exceeds the time or memory budget, but the threshold iteration completes. \added{Let $R_{\mathrm{thr}}$ denote the number of threshold rounds. A threshold round computes one full $\Phi(K)$ update over all columns, whereas a lazy pass incrementally mutates the exposed antichain boundary. Thus, $R_{\mathrm{lazy}}$ and $R_{\mathrm{thr}}$ are algorithm-specific round counts and are not comparable.}

Experiments were run on an NVIDIA RTX~5090 GPU (32\,GB, 170 SMs) with an AMD Ryzen~9 9950X3D CPU. We implement the method in C++/pFaces~\cite{khaled2019pfaces} with OpenCL kernels.
Each key $k\in\mathcal K$ evaluates predecessor feasibility and updates $\tau(k)$ by binary search\deleted{; the host performs one synchronized table swap per iteration}.
Memory is dominated by $\tau\in\mathbb N^{|\mathcal K|}$ with $|\mathcal K|=O(N^{d-1})$.

\begin{table}[t!]
\centering
\ifshowtimingbreakdown
\caption{
Lazy vs. Threshold Comparison. \\
$t_{\rm pre}$ is the Lazy successor-table generation time and is excluded from the reported speedups, which equal $t_{\rm lazy+\tau}/t_{\rm thr}$. Brackets show membership $M$, immediate scan-basis update $U_B$, threshold construction or maintenance $U_\tau$, and basis extraction $X_B$. Bracketed phases omit residual orchestration and therefore need not sum exactly to the total; timeout = 20 minutes.}
\else
\caption{
Lazy vs. Threshold Comparison. \\
$t_{\rm pre}$ is the successor-table generation time used in Lazy methods and is excluded from the reported speedups, which equal $t_{\rm lazy+\tau}/t_{\rm thr}$; R = iterations; timeout = 20 minutes.}
\fi
\label{tab:vs-lazy}
\setlength{\tabcolsep}{0.85pt}
\renewcommand{\arraystretch}{1.02}
\ifshowtimingbreakdown
\scriptsize
\def\phasecell#1#2#3{\shortstack{$#1\,#2$\\$[#3]$}}
\else
\footnotesize
\def\phasecell#1#2#3{$#1\,#2$}
\fi
\resizebox{\columnwidth}{!}{%
\begin{tabular}{@{}ccccccccc@{}}
\toprule
Case & $|X|$ & $t_{\rm pre}$ & \multicolumn{3}{c}{Lazy} & \multicolumn{3}{c}{Threshold} \\
\cmidrule(lr){4-6}\cmidrule(l){7-9}
\ifshowtimingbreakdown
& & & $R_{\rm lazy}$ & \shortstack{$t_{\rm lazy}$\\$[M/U_B]$} & \shortstack{$t_{\rm lazy+\tau}$\\$[M/U_\tau/X_B]$}
& $R_{\rm thr}$ & \shortstack{$t_{\rm thr}$\\$[M/U_\tau]$} & Speedup \\
\else
& & & $R_{\rm lazy}$ & $t_{\rm lazy}$ & $t_{\rm lazy+\tau}$
& $R_{\rm thr}$ & $t_{\rm thr}$ & Speedup \\
\fi
\midrule
ACC & $10^6$ & $.569\,\mathrm{ms}$ & 182 & \phasecell{63.1}{\mathrm{s}}{.656/62.5} & \phasecell{106}{\mathrm{ms}}{10.7/30.3/40.5} & 26 & \phasecell{3.18}{\mathrm{ms}}{3.14/.0370} & $33.4\times$ \\
ACC & $10^9$ & $439\,\mathrm{ms}$ & 1595 & \quad Timeout & \phasecell{126}{\mathrm{s}}{32.2/25.8/43.9} & 24 & \phasecell{16.9}{\mathrm{ms}}{16.2/.678} & $7{,}458\times$ \\
ACC & $10^{10}$ & \multicolumn{4}{c}{Timeout} & 24 & \phasecell{318}{\mathrm{ms}}{315/2.87} & - \\
ACC & $10^{12}$ & \multicolumn{4}{c}{Timeout} & 24 & \phasecell{750}{\mathrm{ms}}{684/66.5} & - \\
ACC & $10^{14}$ & \multicolumn{4}{c}{Timeout} & 24 & \phasecell{7.94}{\mathrm{s}}{6.78/1.16} & - \\
\midrule
ACC-5D & $10^6$ & $.624\,\mathrm{ms}$ & 318 & \phasecell{299}{\mathrm{s}}{2.60/297} & \phasecell{457}{\mathrm{ms}}{32.1/226/133} & 25 & \phasecell{3.34}{\mathrm{ms}}{3.23/.112} & $137\times$ \\
ACC-5D & $10^9$ & $470\,\mathrm{ms}$ & 835 & \quad Timeout & \phasecell{410}{\mathrm{s}}{89.0/160/106} & 31 & \phasecell{143}{\mathrm{ms}}{130/12.8} & $2{,}870\times$ \\
ACC-5D & $10^{11}$ & \multicolumn{4}{c}{Timeout} & 28 & \phasecell{4.94}{\mathrm{s}}{4.50/.443} & - \\
\midrule
T-Ego & $10^6$ & $1.35\,\mathrm{ms}$ & 278 & \phasecell{329}{\mathrm{s}}{1.82/327} & \phasecell{269}{\mathrm{ms}}{21.5/87.4/98.2} & 22 & \phasecell{10.8}{\mathrm{ms}}{10.8/.0400} & $24.9\times$ \\
T-Ego & $10^9$ & $1.18\,\mathrm{s}$ & 2691 & \quad Timeout & \phasecell{341}{\mathrm{s}}{58.5/87.1/125} & 26 & \phasecell{41.6}{\mathrm{ms}}{40.8/.865} & $8{,}188\times$ \\
T-Ego & $10^{10}$ & \multicolumn{4}{c}{Timeout} & 26 & \phasecell{274}{\mathrm{ms}}{270/3.92} & - \\
T-Ego & $10^{12}$ & \multicolumn{4}{c}{Timeout} & 26 & \phasecell{3.18}{\mathrm{s}}{3.10/.0763} & - \\
T-Ego & $10^{14}$ & \multicolumn{4}{c}{Timeout} & 26 & \phasecell{70.0}{\mathrm{s}}{68.4/1.53} & - \\
\midrule
T-Onc & $10^6$ & $.563\,\mathrm{ms}$ & 273 & \phasecell{118}{\mathrm{s}}{.508/117} & \phasecell{173}{\mathrm{ms}}{12.6/60.5/61.2} & 35 & \phasecell{4.22}{\mathrm{ms}}{4.17/.0400} & $41.0\times$ \\
T-Onc & $10^9$ & $458\,\mathrm{ms}$ & 2462 & \quad Timeout & \phasecell{185}{\mathrm{s}}{27.4/51.6/66.5} & 35 & \phasecell{26.4}{\mathrm{ms}}{25.6/.830} & $6{,}993\times$ \\
T-Onc & $10^{10}$ & \multicolumn{4}{c}{Timeout} & 35 & \phasecell{123}{\mathrm{ms}}{119/3.29} & - \\
T-Onc & $10^{12}$ & \multicolumn{4}{c}{Timeout} & 34 & \phasecell{960}{\mathrm{ms}}{885/75.8} & - \\
T-Onc & $10^{14}$ & \multicolumn{4}{c}{Timeout} & 35 & \phasecell{19.2}{\mathrm{s}}{17.7/1.52} & - \\
\bottomrule
\end{tabular}
}
\end{table}

\ifshowoldtable
\begin{table}[t!]
\centering
\caption{Previous Table II\\
$R_{\mathrm{lazy}},R_{\mathrm{thr}}$ = algorithm-specific rounds; $t$ = synthesis time; timeout = 20 minutes.}
\label{tab:vs-lazy-old}
\setlength{\tabcolsep}{3pt}
\renewcommand{\arraystretch}{1.05}
\begin{tabular}{@{}llrrrrrc@{}}
\toprule
System & $|X|$ & \multicolumn{2}{c}{Lazy} & \multicolumn{2}{c}{Threshold} & Speedup \\
\cmidrule(lr){3-4}\cmidrule(lr){5-6}
 & & $R_{\mathrm{lazy}}$ & $t$ & $R_{\mathrm{thr}}$ & $t$ & \\
\midrule
ACC & $10^{8}$ & 897 & 3.5\,s & 25 & 13\,ms & $269{\times}$ \\
ACC & $10^{9}$ & 2012 & 91.4\,s & 24 & 45\,ms & $2{,}031{\times}$  \\
ACC & $10^{10}$ & \multicolumn{2}{c}{Timeout} & 24 & 141\,ms & -- \\
ACC & $10^{12}$ & \multicolumn{2}{c}{Timeout} & 24 & 2470\,ms & -- \\
ACC & $10^{14}$ & \multicolumn{2}{c}{Timeout} & 24 & 43.4\,s & -- \\

\midrule
ACC-5D & $10^{8}$ & 541 & 932.7\,s & 24 & 67\,ms & $13{,}921{\times}$  \\
ACC-5D & $10^{9}$ & \multicolumn{2}{c}{Timeout}  & 26 & 449\,ms & -- \\
ACC-5D & $10^{11}$ & \multicolumn{2}{c}{Timeout}  & 28 & 16.1\,s & -- \\

\midrule
Turn-Ego & $10^{8}$   & 1269  & 49.0\,s    & 25 & 22\,ms   & $2{,}227{\times}$ \\
Turn-Ego & $10^{9}$   & 2738  & 1140\,s  & 26 & 52\,ms   & $21{,}923{\times}$ \\
Turn-Ego & $10^{10}$  & \multicolumn{2}{c}{Timeout}  & 26 & 221\,ms  & --  \\
Turn-Ego & $10^{12}$ & \multicolumn{2}{c}{Timeout}  & 26 & 4577\,ms & -- \\
Turn-Ego & $10^{14}$ & \multicolumn{2}{c}{Timeout} & 26 & 98.5\,s & -- \\
\midrule
Turn-Onc & $10^{8}$   & 1156  & 23.1\,s    & 35 & 15\,ms   & $1{,}540{\times}$ \\
Turn-Onc & $10^{9}$   & 2451  & 306\,s     & 34 & 39\,ms   & $7{,}846{\times}$ \\
Turn-Onc & $10^{10}$  & \multicolumn{2}{c}{Timeout}  & 34 & 154\,ms  & --  \\
Turn-Onc & $10^{12}$ & \multicolumn{2}{c}{Timeout}  & 34 & 3635\,ms & -- \\
Turn-Onc & $10^{14}$ & \multicolumn{2}{c}{Timeout} & 34 & 75.4\,s & -- \\
\bottomrule
\end{tabular}
\end{table}
\fi

\subsection{Real-time controller: unprotected left turn}\label{sec:exp:rt}
To demonstrate online use of the synthesized invariant sets, we employ the threshold method inside an MPPI~\cite{williams2017mppi} controller for an \emph{unprotected left turn} (\Cref{fig:rt-controller}).
The ego vehicle approaches the intersection from the west while two oncoming vehicles approach from the east.
Their velocities are sensed online.
The ego first waits for $\mathrm{onc}_1$ to clear the conflict zone, then commits to the turn before $\mathrm{onc}_2$ closes the available gap.
Each phase is formulated as a 3D robust-safety problem in $(s_e,v_e,s_{\mathrm{onc}})$, and the corresponding invariant set is re-synthesized when the sensed oncoming speed changes sufficiently.
Let $K_{\mathrm{w}}\subseteq\R^3$ and $K_{\mathrm{g}}\subseteq\R^3$ be the two phase sets.
The two-oncoming safety condition is represented by the intersection of these two 3D constraints in lifted coordinates:
\[
K_{2}
=\{(s_e,v_e,s_1,s_2):\ 
(s_e,v_e,s_1)\in K_{\mathrm{w}},\\
 (s_e,v_e,s_2)\in K_{\mathrm{g}}\}.
\]
This lifted set is used only for online rollout membership queries; we do not claim that the full 4D system is monotone or that $K_2$ is a 4D controlled-invariant set.
The reduction is sound for simultaneously checking the two phase certificates under the modeled independent oncoming evolution.
A formal 4D invariance claim would additionally require a common witness input for both component constraints, which need not exist because the wait and go constructions impose opposite orthant orders on $(s_e,v_e)$.

\begin{figure}[t]
\centering
\begin{tikzpicture}[scale=0.75,
  >=Stealth,
  font=\scriptsize,
  car/.style={draw, thick, rounded corners=1.5pt, minimum width=9mm, minimum height=5mm, inner sep=0pt},
  egocar/.style={car, fill=safeblue!30, draw=safeblue!80!black},
  onccar/.style={car, fill=orange!50, draw=orange!60!black},
]
\definecolor{asphalt}{RGB}{120,120,120}

\begin{scope}[xshift=0cm]
\fill[asphalt] (-1.0,-1.1) rectangle (8.8,1.1);
\fill[asphalt] (2.7,1.1) rectangle (4.9,4.5);
\draw[dashed, white!60] (3.8,1.1) -- (3.8,4.5);
\fill[asphalt, draw=red!100, dashed, thick] (2.7,-1.1) rectangle (4.9,1.1);
\node[red!0] at (3.8,-0.8) {conflict zone};
\draw[dashed, white!60] (-1.0,0.0) -- (8.8,0.0);

\node[egocar] (ego) at (1.3,-0.55) {ego};
\draw[->, safeblue!70!black, thick] (2.0,-0.55) -- (2.9,-0.55);
\node[onccar] (onc1) at (5.6,0.55) {onc\textsubscript{1}};
\draw[->, red!50!black, thick] (6.2,1.35) -- (5.0,1.35) node[midway, above=1pt, font=\tiny, red!60!black] {$\vec v_{\mathrm{onc},1}$};
\node[onccar] (onc2) at (7.8,0.55) {onc\textsubscript{2}};
\draw[->, red!50!black, thick] (8.4,1.35) -- (7.2,1.35) node[midway, above=1pt, font=\tiny, red!60!black] {$\vec v_{\mathrm{onc},2}$};
\draw[accentgreen, thick, -{Stealth}, line width=0.8pt]
  (2.0,-0.55) -- (3.35,-0.55)             
  .. controls (3.95,-0.55) and (4.35,0.20) .. (4.35,1.5)  
  -- (4.35,3.8);                          
\end{scope}

\end{tikzpicture}
\caption{Real-time controller for an unprotected left turn. \replaced{The threshold table acts as a rollout safety filter.}{An MPPI controller samples rollouts and queries the threshold table once per rollout step to evaluate the set-membership penalty.} The resulting optimized control input is applied. When 
oncoming velocities change, 
the invariant set is resynthesized.}
\label{fig:rt-controller}
\end{figure}


The invariant-set algorithm returns $K_*$, not an explicit policy.
In closed loop, we use $K_*$ \deleted{as a real-time certificate} inside an optimizer: the threshold table gives an $O(1)$ membership query that is easy to insert into a sampling-based MPPI cost\deleted{ or safety filter}.
MPPI is a natural choice here because it handles nonlinear dynamics and nonsmooth table penalties without requiring gradients of the threshold representation; first-order MPC methods would need smoothing of the same discontinuous membership test.

Let $T_s$ denote the controller sampling period, $H_p$ the prediction horizon, $H_c$ the control horizon, $N_r$ the number of rollouts per MPPI iteration, $N_{\mathrm{it}}$ the number of MPPI updates per control step, $\sigma_u$ the standard deviation of the Gaussian input perturbations, and $[u_{\min},u_{\max}]$ the admissible torque interval.
Their numerical values are reported in \Cref{tab:rt}.
The stage cost is
$
\ell(x_k,u_k)=w_p(s_{e,k}-g)^2+w_s\mathbf 1_{\{x_k\notin K_{2}\}}+w_j\|u_k-u_{k-1}\|^2
$.
Each table is computed for the oncoming-speed parameters and interval vehicle parameters, and re-synthesis is triggered when the sensed velocity change exceeds the threshold $\varepsilon_{\mathrm{res}}$ listed in \Cref{tab:rt}.
The  \replaced{safety filter}{safety-informed MPPI} performs an $O(1)$ threshold lookup per rollout step; states outside $K_{\mathrm{2}}$ accumulate a penalty $w_s$\added{, rather than being prohibited by a hard constraint, and the implemented controller provides no formal closed-loop safety guarantee. Such a guarantee would require a hard one-step invariant constraint or an invariant-preserving backup policy, 
which is 
left for future work}.
The threshold lookup and MPPI control computation times are negligible relative to the re-synthesis times. 

\begin{table}[t]
\centering
\caption{Real-time unprotected left turn: configuration and timings.}
\label{tab:rt}
\setlength{\tabcolsep}{3pt}
\renewcommand{\arraystretch}{1.05}
\begin{tabular}{@{}p{5.0cm}p{3.0cm}@{}}
\toprule
Quantity & Value \\
\midrule
Grid cells $|X|$ & $10^{10} = 10^4 \times 10^2 \times 10^4$ \\
Sampling period $T_s$ & $0.1$\,s \\
Prediction and control horizon $H_p, H_c$ & $20$ steps \\
MPPI iterations / step $N_{\mathrm{it}}$ & $10$ \\
Rollouts / iteration $N_r$ & $512$ \\
Perturbation standard deviation $\sigma_u$ & $1000$\,Nm \\
Cost weights $(w_p,w_s,w_j)$ & $(1,10^4,10^{-2})$ \\
Resynthesis threshold $\varepsilon_{\mathrm{res}}$ & $0.2$\,m/s \\
Wait and go synthesis $t^{\mathrm{wait}}, t^{\mathrm{go}}$ & $77 \pm 15, 62 \pm 7\,ms$ \\
Average MPPI control time & $2$\,ms \\
\bottomrule
\end{tabular}
\end{table}

\section{Conclusion}\label{sec:conclusion}
We introduced a threshold-function reformulation of greatest-fixed-point invariant-set synthesis for monotone systems.
The algorithm is embarrassingly parallel and orders of magnitude faster than lazy fixed-point iteration.
Experiments on \deleted{ACC and unprotected-left-turn} benchmarks confirm speedups \replaced{up to  $21{,}924\times$}{on the order of thousands} over the  lazy baseline.
The threshold iteration reaches grids exceeding $10^{9}$ cells around 100\,ms, enabling real-time re-synthesis for closed-loop control.
\deleted{We demonstrated this by embedding it in an MPPI controller that re-synthesizes the invariant set on-the-fly as oncoming vehicle speeds are sensed.}
However, the method requires monotonicity under some orthant order and lower-closedness of the safety set.
The threshold table stores only the set and gives no specific control policy.
Grid resolution is bounded by GPU memory ($O(N^{d-1})$ integers).
\bibliography{main}
\end{document}